\pdfoutput=1
\documentclass[10pt,conference,twocolumn]{IEEEtran}
\usepackage{array}
\usepackage{amsmath,graphicx,epsfig,float,subfigure,times,latexsym,verbatim,mathrsfs,amssymb,textcomp,txfonts,color}
\usepackage[flushleft]{threeparttable}
\usepackage{algorithm}
\usepackage{algorithmic}
\usepackage{url}
\usepackage{bm}
\newtheorem{proposition}{Proposition}

\setlength{\abovedisplayskip}{5pt}
\setlength{\belowdisplayskip}{5pt}
%****************************************************************************
% Title
\title{M$^2$I Communication: From Theoretical Modeling to Practical Design
\vspace{-15pt}\\}

%****************************************************************************
% Author Information
\author{%
Hongzhi Guo~and~ Zhi~Sun
\thanks{This work was supported by the US National Science Foundation (NSF) under Grant No. 1547908.}
% add some space between author names and affils
\vspace{5pt}\\
State University of New York at Buffalo, Buffalo, NY, USA \\
\{hongzhig, zhisun\}@buffalo.edu%
}
\IEEEoverridecommandlockouts
% An example with multiple affiliations
%\author{%
%J. Clerk Maxwell{\small $~^{\#}$}, Michael Faraday{\small $~^{*}$},
%and Andre M. Ampere{\small $~^{\#}$}%
% add some space between author names and affils
%\vspace{12pt}\\
%$~^{\#}$Communication Research, City, State/Region, Mail/Zip Code,
%Country \\
%$~^{*}$Second Company Address Including Country Name
%}
%

%****************************************************************************
\begin{document}
\maketitle

%****************************************************************************
%
\begin{abstract}
Wireless communications in complex environments are constrained by lossy media and complicated structures. Magnetic Induction (MI) has been proved to be an efficient solution to extend the communication range. Due to the small coil antenna's physical limitation, however, MI's communication range is still very limited. To this end, Metamaterial-enhanced Magnetic Induction (M$^2$I) communication has been proposed and the theoretical results suggest that it can significantly increase the communication performance, namely, data rate and communication range. Nevertheless, currently, the real implementation of M$^2$I is still a challenge and there is no guideline on design and fabrication of spherical metamaterial. In this paper, we propose a practical design by using a spherical coil array to realize M$^2$I and we prove that it can achieve negative permeability and there exists a resonance condition where the radiated magnetic field can be significantly amplified. The radiation and communication performance are evaluated and full-wave simulation in COMSOL Multiphysics is conducted to validate the design objectives. By using the spherical coil array-based M$^2$I, the communication range can be significantly extended, exactly as we predicted in the theoretical model.
\end{abstract}

%****************************************************************************
% NOTE keywords are not used for conference papers so do not populate
%% them
%\begin{keywords}
%
%\end{keywords}

%****************************************************************************

%****************************************************************************
\section{Introduction}
Although terrestrial wireless communication has been well developed and extensively utilized, its counterpart in complex environments, such as underground, oil reservoirs and nuclear plants, is still in its infancy. Wireless communication in such environments can enable a large number of important applications, such as harsh environment monitoring by using wireless sensors, miners rescue, and mitigation in nuclear plants using robots. Wireless signals in complex environments suffer from high absorption and multiple scattering which results in significant propagation loss. Therefore, extremely large antenna and high transmission power have to be used to accomplish signal transmissions. However, the devices for wireless communication is becoming smaller and smaller, such as sensors for wireless sensor networks and portable devices for wireless area networks. These devices can not be equipped with such large antennas or provide enough power. As a result, a new communication mechanism is highly desired to solve this problem.

Magnetic Induction (MI) as a promising solution has been envisioned to enable long-distance wireless communications in underground~\cite{Sun_MI_TAP_2010} and underwater~\cite{Guo_underwater}. Wireless signals are transmitted by using the reactive power constrained in the near field of a coil rather than using the propagating wave. Thus, it suffers from lower propagation loss and signal delay. Also, it enjoys a stable channel since the permeability is the same for most of the natural materials. Even the generated power by a coil antenna attenuate slower than that by an electrical antenna, the intrinsic physical limitation of the highly inductive coil antenna prevents further improving its efficiency~\cite{karlsson2004physical}. Metamaterial is introduced to MI in~\cite{guo2015m2i}, where an ideal negative-permeability metamaterial shell is utilized to surround a loop antenna, as shown in Fig.~\ref{fig:review1}. The near-field reactive power generated by the loop antenna can be matched by a metamaterial shell. The theoretical results predict that a pocket-size loop antenna can achieve around 20~m communication range with high data rate. However, since the considered metematerial in~\cite{guo2015m2i} is ideally homogeneous and isotropic which is not available in reality, a practical design is highly desired to validate the theoretically predicted results.

Metamaterial is composed of periodic artificial metallic or dielectric atoms. It can demonstrate negative permeability and permittivity, which can manipulate electromagnetic waves in extraordinary ways~\cite{pendry1999magnetism}. The periodic metamaterial components can be organized in many ways. In~\cite{Wang_WPT,scarborough2012experimental}, metamaterial unit cells formed a slab which is utilized for high-efficiency wireless power transmission. In \cite{xie2012proposal}, a metamaterial cylindrical shell is fabricated to improve the accuracy of Magnetic Resonance Imaging (MRI). More relevantly, most of existing works on metamaterial cloaking at GHz or THz bands, which makes the objects inside a spherical shell invisible, consider the ideally homogeneous metamaterial. Nevertheless, when it comes to implementation, spherical cloaks are simplified to cylinders due to the complexity of the 3D structure~\cite{schurig2006metamaterial}. Currently, it is still a great challenge to make metamaterial spherical. Besides the shape, since metamaterial is a kind of effective media~\cite{EMT_theory}, the effective parameters, such as effective permeability and thickness, are hard to extract. Since in~\cite{guo2015m2i}, the high efficiency of M$^2$I strongly depends on the negative permeability and thickness, it is crucial to find them out.

In this paper, the ideally theoretical model of Metamaterial-enhanced Magnetic Induction (M$^2$I) communication in~\cite{guo2015m2i} is pushed forward to a practical design. Specifically, we substitute the ideally homogeneous and isotropic metamaterial with a spherical coil array to realize M$^2$I. A large number of small coils are uniformly placed on a spherical shell to enhance the radiated field by a loop antenna, which in turn increases the communication range and data rate. We prove that this spherical coil array can achieve negative permeability. In addition, the optimal configuration of the proposed spherical coil array is found and its communication performances are similar as the ideal M$^2$I in~\cite{guo2015m2i}, both of which are much better than the original MI~\cite{Sun_MI_TAP_2010}. The results are evaluated and validated by full-wave simulations. Generally, the contribution of this paper, which has not been reported in existing works, can be summarized as follows: 1) a practical design based on a spherical coil array is proposed to realize the ideal M$^2$I; 2) we find the optimal conditions to achieve the negative permeability and the communication range is significantly increased; 3) we provide both insightful design guidelines and validations by using full-wave simulations in COMSOL Multiphysics.
\begin{figure*}[t]
\begin{minipage}{0.32\textwidth}
\centering
  \includegraphics[width=0.9\textwidth]{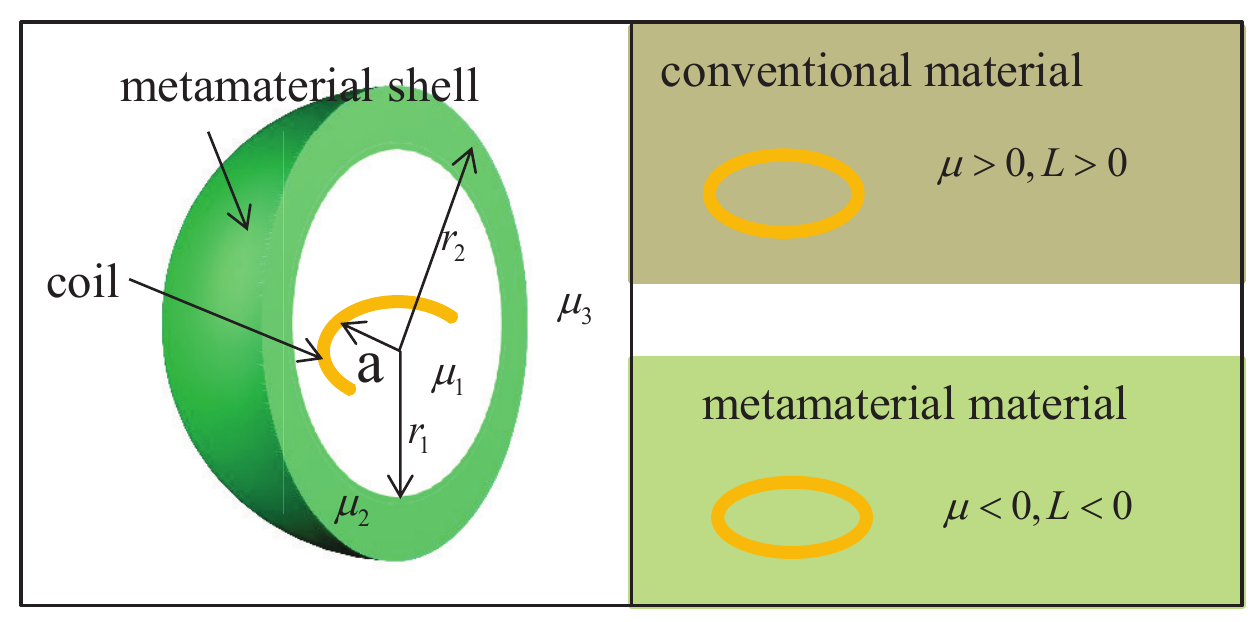}
  \caption{Illustration of M$^2$I.}
    \label{fig:review1}
\end{minipage}\quad
\begin{minipage}{0.2\textwidth}
\centering
  \includegraphics[width=0.95\textwidth]{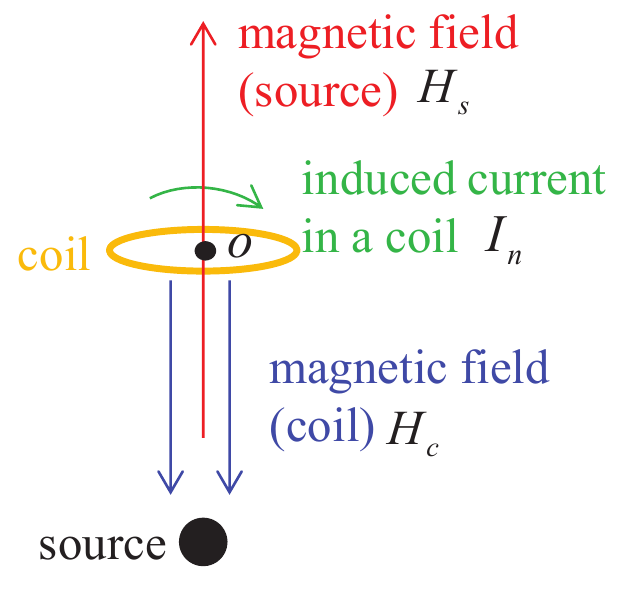}
  \caption{Illustration of effective $\mu$.}
    \label{fig:ideal2}
\end{minipage}\quad
\begin{minipage}{0.43\textwidth}
\centering
  \includegraphics[width=0.9\textwidth]{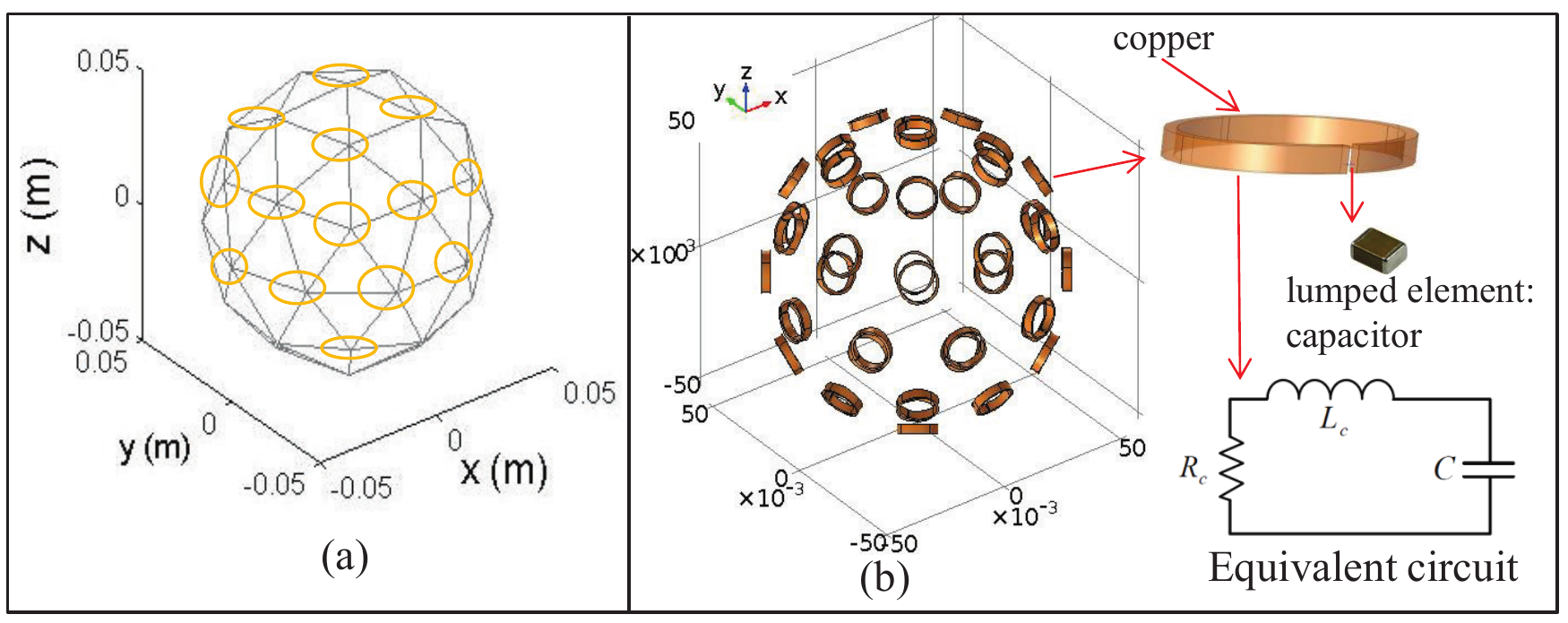}
  \caption{Spherical coil array. (a) Pentakis icosidodecahedron with 42 vertices; each vertex is the center of a coil; the orientation of a coil is the radial direction from the center of the sphere to the corresponding vertex. (b) 3D model and equivalent circuit.}
    \label{fig:geo1}
\end{minipage}\vspace{-13pt}
\end{figure*}

The following of this paper is organized as follows. First, the theoretical details of the ideal M$^2$I is reviewed in Section II. After that, a practical design of M$^2$I is discussed in Section III. Next, the wireless communication performance by using practical M$^2$I are presented in Section IV. Finally, this paper is concluded in Section V.

\section{Theoretical Modeling of M$^2$I }
\label{sec:review}

M$^2$I was proposed in our previous work \cite{guo2015m2i} to extend the magnetic induction communication range in complex environments, such as underwater and underground. The geometric structure is illustrated in Fig. \ref{fig:review1}. A loop antenna is enclosed by a metamaterial shell with thickness $r_2-r_1$. The permeability of the inner layer, metamaterial layer and outer layer are denoted by $\mu_1$, $\mu_2$ and $\mu_3$, respectively. Also, the metamaterial layer is considered as ideally homogeneous and isotropic. The key parameter of MI communication is the mutual inductance, which couples the transmitting and receiving antenna together. By using M$^2$I, the mutual inductance can be expressed as \cite{guo2015m2i} $M_0=\frac{\mathcal F_1}{\bar S_{m}^2}$, where
\begin{gather}
\label{equ:sm}
\textstyle
{\bar S_{m}}={\mathcal F_2}\underbrace{\left[2r_1^3(\mu_1-\mu_2)(\mu_3-\mu_2)-r_2^3(2\mu_2+\mu_1)(2\mu_3+\mu_2)\right]}_{\text{${\mathcal F_3}$}}+{\hat o},
\end{gather}
${\mathcal F_1}$ and ${\mathcal F_2}$ are coefficients, and ${\hat o}$ is an asymptotically small value. The first item on the right-hand side of \eqref{equ:sm} is much larger than ${\hat o}$. When ${\mathcal F_3}$ is zero, ${\bar S_m}$ can be minimized, which in turn maximizes $M_0$. Therefore, if the outer shell radius $r_2$, inner permeability $\mu_1$ and outer permeability $\mu_3$ are determined, by adjusting metamaterial shell's thickness $r_1$ and permeability $\mu_2$, we can always satisfy the condition to maximize $M_0$.

This enhancement is because of the matching between the metamaterial layer and the coil antenna. As depicted on the right-hand side in Fig. \ref{fig:review1}, a coil in conventional materials has a positive self-inductance $L$, while $L$ becomes negative when the coil is in metamaterials due to the negative permeability. Note that, the impedance of a coil in conventional materials is positive $j\omega L$, where $j=\sqrt{-1}$ and $\omega$ is the angular frequency. On the contrary, a coil in metamaterials has negative impedance, which is equivalent to a capacitor $C$ whose impedance is also negative $\frac{-j}{\omega C}$. Intuitively, we can use metamaterials to match with conventional materials to reduce the reactive power in vicinity of a loop antenna\cite{Ziolkowski_electric,erentok2008metamaterial}.

\section{Practical Design of M$^2$I Shell}
In this section, firstly, an approach is presented to achieve the negative permeability. Then, based on it we propose a spherical structure to obtain the negative permeability. Both the mathematical deductions and intuitive interpretations are provided.
\subsection{Negative Permeability}
\label{sec:negativemu}
When a mixture's components are much smaller than wavelength, the mixture can be regarded as an effectively homogeneous medium and its constitutive parameters, such as effective permeability, permittivity and conductivity, can be found \cite{EMT_theory}. First, by using a coil as an example, we show the way to obtain the effective permeability.

As shown in Fig. \ref{fig:ideal2}, a source radiates magnetic field $H_s {\hat s}$ which can induce current $I_s$ in a coil whose center is a point $o$. Here, ${\hat s}$ is a unit vector standing for the direction of magnetic field. If we consider there is no coil, at point $o$, the magnetic flux can be expressed as $B_m =\mu_0 H_s {\hat s}$, where $\mu_0$ is the permeability of vacuum. However, if the coil exits, due to the induced current $I_s$, the coil reradiates magnetic field $H_c {\hat c}$, where ${\hat c}$ is also a unit vector denoting the direction of the reradiated magnetic field. As a result, the magnetic flux at point $o$ can be updated as
\begin{gather}
\label{equ:effectivemu}
B_m=\mu_0 H_s {\hat s}+\mu_0 H_c {\hat c}=\mu_0\left(1+\frac{H_c {\hat c}}{H_s {\hat s}}\right)H_s {\hat s}=\mu_{eff}H_s {\hat s},
\end{gather}
where $\mu_{eff}$ is the effective permeability at point $o$.

From \eqref{equ:effectivemu} we can see, the effective permeability can be controlled provided that we can manipulate the reradiated magnetic field $H_c{\hat c}$. When $\frac{H_c {\hat c}}{H_s {\hat s}}$ is negative and its absolute value is larger than 1, the negative permeability can be obtained. The detailed approach to adjust $H_c{\hat c}$ is discussed in the following sections. Also, it's worth noting that besides coil, there are many other radiators can be utilized to achieve negative permeability, such as complicated metallic structures and dielectric spheres \cite{caloz2005electromagnetic}. Since coil is relatively easy to fabricate and its performance is tractable, we adopt it in this paper.
\subsection{Spherical M$^2$I Shell}
%\begin{figure}[t]
%  \centering
%    \includegraphics[width=0.45\textwidth]{fig/geo1}
%    \vspace{-5pt}
%  \caption{Spherical coil array. (a) Pentakis icosidodecahedron with 42 vertices; each vertex is the center of a coil; the orientation of a coil is the radial direction from the center of the sphere to the corresponding vertex. (b) 3D model and equivalent circuit. }
%  \vspace{-10pt}
%  \label{fig:geo1}
%\end{figure}

The geometrical structure of the spherical coil array is presented in Fig. \ref{fig:geo1}. The same as our discussion in \cite{guo2015m2i}, we set the outer radius of the spherical shell as 0.05~m. By using the Pentakis icosidodecahedron \cite{horn1984extended}, we equally divide the surface of the sphere into 80 triangles with 42 vertices. Each vertex is the center of a coil and the coil's orientation is the radial direction from the center of the shell to the vertex. In this way, the coil array is constructed by 42 identical coils. The realistic 3D model is demonstrated in Fig. \ref{fig:geo1}(b). The coils are made of copper wires with capacitors to tune them. In the equivalent circuit $L_c$, $R_c$, and $C$ represent the coil inductance, coil resistance, and capacitance, respectively. In the following, we proof that this structure can achieve negative permeability and it is equivalent to the ideal metamaterial shell in \cite{guo2015m2i}.
\subsubsection{Effective Permeability of M$^2$I Shell}
\begin{figure}[t]
  \centering
    \includegraphics[width=0.23\textwidth]{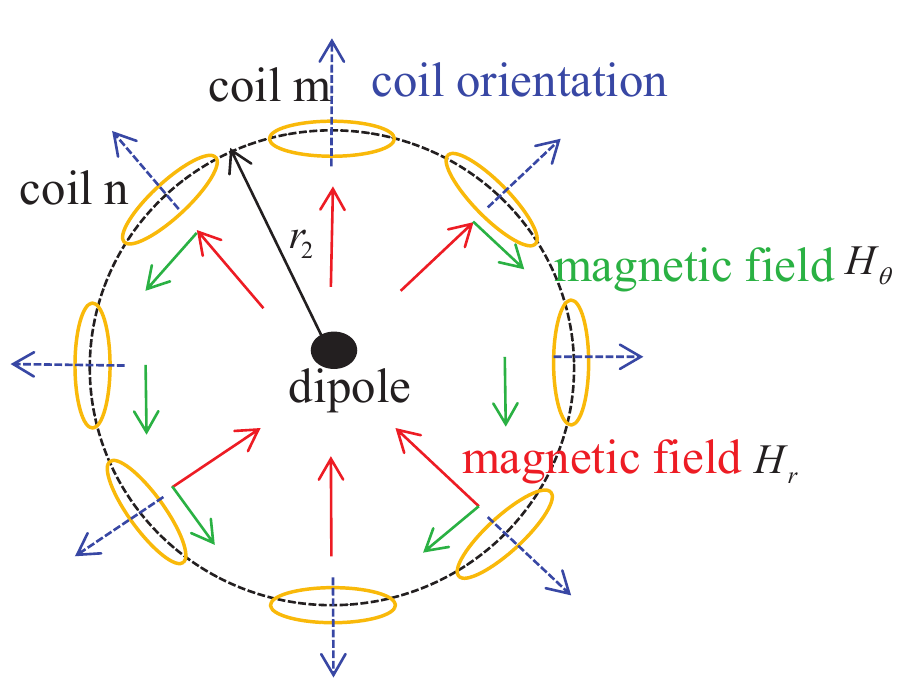}
    \vspace{-5pt}
  \caption{Illustration of coil orientation and the direction of magnetic field radiated by a magnetic loop antenna. }
  \vspace{-10pt}
  \label{fig:ideal12}
\end{figure}
First, the spherical coil array are considered as a homogeneous layer. By using Effective Medium Theory \cite{pendry1999magnetism}, we find the effective permeability of this layer. The radiation source is the same as our work in \cite{guo2015m2i}, i.e., a magnetic loop antenna.
All the coils on the shell are passive and they are excited by a loop antenna at the center of the shell, as shown in Fig. \ref{fig:ideal12}. According to Kirchhoff voltage law we can obtain,
\begin{equation}
\label{equ:app1}
\underbrace{\begin{bmatrix}
  Z_1 & j\omega M_{12} & \cdots&j\omega M_{1N} \\
  j\omega M_{21} &Z_2 & \cdots & j\omega M_{2N} \\
  \vdots & \vdots & \vdots & \vdots \\
  j\omega M_{N1} & j\omega M_{N2}& \cdots & Z_{N}
\end{bmatrix}}_{\text{${\bf Z}$}}
\cdot
\underbrace{
\begin{bmatrix}
  I_1 \\
  I_2 \\
  \vdots \\
  I_N
\end{bmatrix}}_{\text{${\bf I}$}}=
\underbrace{\begin{bmatrix}
  V_1 \\
  V_2\\
  \vdots \\
  V_N
\end{bmatrix}}_{\text{${\bf V}$}}
\end{equation}
where $N$ is the number of coils on the shell, $Z_n=R_c+j\omega L_c+1/j\omega C$, $M_{mn}$ is the mutual inductance between coil $m$ and coil $n$ and the detailed calculation is provided in Appendix A, $I_n$ is the induced current in coil $n$, and $V_n$ is the voltage. The magnetic field radiated by a coil or loop antenna can be expressed as \cite{Balanis_a}
\begin{align}
\label{equ:antenna_field}
H_r(I_0)&={\hat H_r}\cos\theta{\hat r_0}=\frac{j k a^2 I_0 \cos{\theta}}{2d^2}\left[1+\frac{1}{jkd}\right]e^{-jkd}{\hat r_0},\\
H_t(I_0)&=\frac{- k^2 a^2 I_0 \sin{\theta}}{4d}\left[1+\frac{1}{jkd}-\frac{1}{(kd)^2}\right]e^{-jkd}{\hat \theta_0},
\end{align}
where $a$ is the radius of a coil, $k$ is the wavenumber, $d$ is the distance from the coil's center, $\theta$ is the azimuthal angle, and ${\hat r_0}$ and ${\hat \theta_0}$ are unit vectors representing the radial and azimuthal direction, respectively. The direction of $H_r$ and $H_t$ are depicted in Fig. \ref{fig:ideal12}. Since the coil orientation is parallel with $H_r$'s direction and perpendicular to $H_t$'s direction, only $H_r$ can induce currents in the coils on the shell. As a result, this shell can be regarded as metamaterial for $H_r$, but not for $H_t$. In our future work, a tri-directional coil \cite{Guo_oil_2014} will be adopted to make the coils on the shell isotropic but it's out of the scope of this paper. Accordingly, $V_n$ can be expressed as
\begin{align}
\label{equ:voltage}
V_n=-j\omega\pi a^2 \mu_0 {\hat H_r} \cos{\theta_n},
\end{align}
where ${\hat H_r}$ can be derived from \eqref{equ:antenna_field} and it is the same for all the coils on the shell. According to Kirchhoff's voltage law, without loss of generality, in coil $n$ we can obtain
\begin{gather}
\label{equ:kirchhoff}
\textstyle
I_n(R_c+j\omega L_c-j/\omega C)+\sum_{i=1,i\neq n}^{N}j\omega M_{in}I_i=-j\omega\pi a^2 \mu_0 {\hat H_r} \cos \theta_n.
\end{gather}
By rearranging \eqref{equ:kirchhoff} we can obtain,
\begin{gather}
\label{equ:rearrange1}
\frac{I_n(R_c+j\omega L_c+1/j\omega C)}{{\hat H_r \cos \theta_n}}=-j\omega\pi a^2 \left(\sum_{i=1,i\neq n}^{N}\frac{M_{in}I_i}{\pi a^2 {\hat H_r \cos\theta_n}}+\mu_0\right).
\end{gather}
In addition, we have the following proposition:
\begin{proposition}
\label{pro:IH}
For any coil $n$ on the shell, $I_n={\mathcal A} {\hat H_r} \cos \theta_n$, where ${\mathcal A}$ is a constant.
\end{proposition}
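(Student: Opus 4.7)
The plan is to verify the ansatz $I_n = \mathcal{A}\hat{H}_r\cos\theta_n$ directly in the system~\eqref{equ:kirchhoff} and then solve for $\mathcal{A}$. Writing $Z := R_c + j\omega L_c + 1/(j\omega C)$, substituting the ansatz into the $n$th equation produces
\[
\mathcal{A}\hat{H}_r\left[Z\cos\theta_n + j\omega\sum_{i\neq n}M_{in}\cos\theta_i\right] = -j\omega\pi a^2\mu_0\hat{H}_r\cos\theta_n,
\]
so the ansatz is self-consistent if and only if the lattice sum $S_n := \sum_{i\neq n}M_{in}\cos\theta_i$ is of the form $K\cos\theta_n$ for some constant $K$ independent of $n$. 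Granting this, one reads off $\mathcal{A} = -j\omega\pi a^2\mu_0/(Z+j\omega K)$.

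The reduction $S_n = K\cos\theta_n$ is a pure symmetry statement. First I would observe that, since every coil is identical, radially oriented, and located on a common sphere of radius $r_1$, the mutual inductance $M_{in}$ depends only on the chord between vertices $i$ and $n$ and hence is a function of $\hat{r}_i\cdot\hat{r}_n$ alone. Second, the Pentakis icosidodecahedron is invariant under the full icosahedral group $I_h$, so the $42\times 42$ mutual-inductance matrix $\mathbf{M}$ commutes with the permutation action of $I_h$ on $\mathbb{R}^{42}$. The vector $\mathbf{c}=(\cos\theta_n)_n$, together with the analogous vectors for the other two coordinate axes, spans an $I_h$-invariant subspace of $\mathbb{R}^{42}$ carrying the defining $\ell=1$ representation of $I_h$, which is irreducible over $\mathbb{R}$. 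Schur's lemma applied to $\mathbf{M}$ restricted to this subspace then yields $\mathbf{M}\mathbf{c}=K\mathbf{c}$, i.e.\ $S_n = K\cos\theta_n$ as required.

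The main obstacle will be this last step. One must check that the three coordinate-evaluation vectors are genuinely linearly independent on the 42 vertices (so that they span a true copy of the standard $\ell=1$ representation), and deal with the fact that, because the 42 vertices split into two $I_h$-orbits (the 30 icosidodecahedron vertices and the 12 pentagon centers), the $\ell=1$ representation can occur with multiplicity greater than one in the full permutation representation. In that case Schur's lemma alone only produces a block-scalar form, and one has to argue separately — either by a direct $SO(3)$-invariance argument on the kernel $f(\hat{r}_i\cdot\hat{r}_n)$, or by an explicit orbit-by-orbit computation — that the specific vector $\mathbf{c}$ is still a common eigenvector with a single eigenvalue $K$. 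Once that is settled, identifying $K$ with the geometric mutual-inductance sum computed in Appendix~A and reading off $\mathcal{A}$ is routine algebra.
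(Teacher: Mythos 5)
Your route is genuinely different from the paper's, and it is worth seeing what each buys. The paper's Appendix~B argues in two steps: (i) for a \emph{uniform} excitation $V_n=-j\omega\pi a^2\mu_0\hat H_r$ it invokes the symmetry of the shell to conclude $I_1=\cdots=I_N$, which collapses the system to a scalar equation and yields $\mathcal A$ with the row sum $\sum_{i\neq n}M_{in}$ in the denominator; (ii) for the actual $\cos\theta_n$-weighted excitation it treats $\mathbf Z$ as approximately diagonal on the grounds that $j\omega M_{in}\ll R_c+j\omega L_c+1/j\omega C$, so that $\mathbf Z$ commutes with the diagonal matrix $\mathbf K=\mathrm{diag}(\cos\theta_n)$ and the currents simply pick up the factor $\cos\theta_n$. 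In other words, the paper only establishes the proposition to leading order in the inter-coil coupling. Your approach aims for an exact statement: substitute the ansatz, reduce to the eigenvector condition $\mathbf M\mathbf c=K\mathbf c$ for $\mathbf c=(\cos\theta_n)_n$, and try to force it by icosahedral symmetry. Note also that even if your symmetry step went through, your $\mathcal A$ (denominator $Z+j\omega K$ with $K=S_n/\cos\theta_n$) would differ from the paper's (denominator with the unweighted row sum); the two agree only under the same weak-coupling approximation.

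The obstacle you flag at the end is real and, as stated, fatal to the exact version of the argument. The 42 vertices of the pentakis icosidodecahedron split into two $I_h$-orbits (the 30 icosidodecahedron vertices and the 12 added pentagon centers), so the $\ell=1$ isotypic component of the permutation representation has multiplicity two; Schur's lemma only gives a $2\times2$ scalar block, and $\mathbf M\mathbf c$ is a priori a combination $K_1\mathbf c^{(1)}+K_2\mathbf c^{(2)}$ of the two orbit-restricted cosine vectors with generically \emph{unequal} constants, since the two orbits have inequivalent local environments. Your proposed rescue via ``$SO(3)$-invariance of the kernel'' does not apply: the kernel $f(\hat r_i\cdot\hat r_n)$ is only sampled on a discrete, merely $I_h$-invariant vertex set, so no continuous-group averaging is available. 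An explicit orbit-by-orbit computation would simply exhibit $K_1\neq K_2$. Consequently the exact eigenvector property you need is false for this geometry, and the proposition can only be recovered the way the paper recovers it --- by declaring the mutual-inductance terms a small perturbation of the diagonal, in which case the discrepancy between the two orbit constants is pushed to higher order. Your proof cannot be closed without importing that same approximation, and you should state it explicitly rather than leave the final step open.
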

\begin{proof}
See Appendix B.
\end{proof}

If only coil $n$ exists on the spherical shell and all other coils are removed, there is only $\mu_0$ in the bracket on the right-hand side of \eqref{equ:rearrange1}. In other words, the first item in the bracket denotes the mutual interactions among coils on the shell. When there are no other coils, this item does not exist and this is indeed the conventional material with permeability $\mu_0$. Hence, when we consider all other coils on the shell, the terms in the bracket can be regarded as the effective permeability. In addition, based on Proposition~\ref{pro:IH}, since $\frac{I_n}{{\hat H_r} \cos \theta_n}={\mathcal A}$, the left-hand side of \eqref{equ:rearrange1} is a constant. Therefore, no matter which coil on the shell we choose, the value in the bracket does not change, i.e., the effective permeability is homogeneous.

By using \eqref{equ:rearrange1} and the value of ${\mathcal A}$ in Appendix B, we can obtain
\begin{gather}
\label{equ:effmu}
\textstyle
\mu_{eff}=\left(1-\frac{j\omega \sum_{i=1,i\neq n}^{N}M_{in}}{R_c+j\omega L_c+1/j\omega C+\sum_{i=1,i\neq n}^{N}j\omega M_{in}}\right)\mu_0.
\end{gather}
From \eqref{equ:effmu} we can see when the second item in the bracket is larger than 1, a negative permeability can be achieved. Moreover, from \eqref{equ:effmu} we can find that the effective permeability is only determined by the coils' mutual inductances and the lumped elements. It is not affected by the source.

Based on \eqref{equ:effmu}, we numerically evaluate the effective permeability of the spherical coil array. The numerical parameters of the system are provided in Table I. Note that in Table I, $L_{app}=\frac{\pi \mu_0 a}{2}$ is the approximated self-inductance and $L_c$ is the accurate value measured in COMSOL. $L_{app}$ is utilized to obtain a succinct closed-form formula to elucidate physics better. In the following, we will show that this $L_{app}$ only changes the resonant frequency.
\begin{table}%[!t]
\centering
\renewcommand{\arraystretch}{1.3}
\caption{Simulation Parameters}
\label{table_parameters}
\begin{tabular}{c c|c c|c c}
\hline
$\mu_0$ & $4\pi\times10^{-7}~$H/m& $\epsilon_0$&$8.854\times10^{-12}$~F/m&$L_{app}$&13.8~nH\\
\hline
$f$& 10~MHz& $a$&0.007~m&$L_{c}$&15.2~nH\\
\hline
$r_2$&0.05~m & $R_c$&0.244~m$\Omega$\\
\hline
\end{tabular}
     \vspace{-8pt}
\end{table}

\begin{figure}[t]
  \centering
  \subfigure[Effect of capacitance]{
    \label{fig:mueff_cap}
    \includegraphics[width=0.21\textwidth]{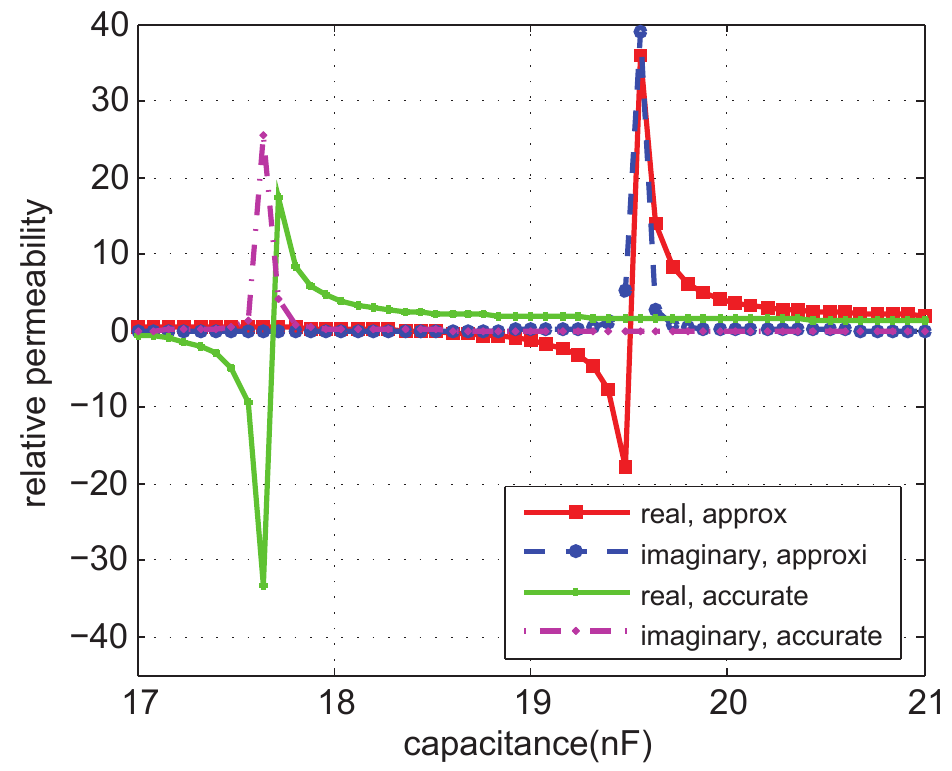}}\quad
  \subfigure[Effect of frequency]{%
    \includegraphics[width=0.21\textwidth]{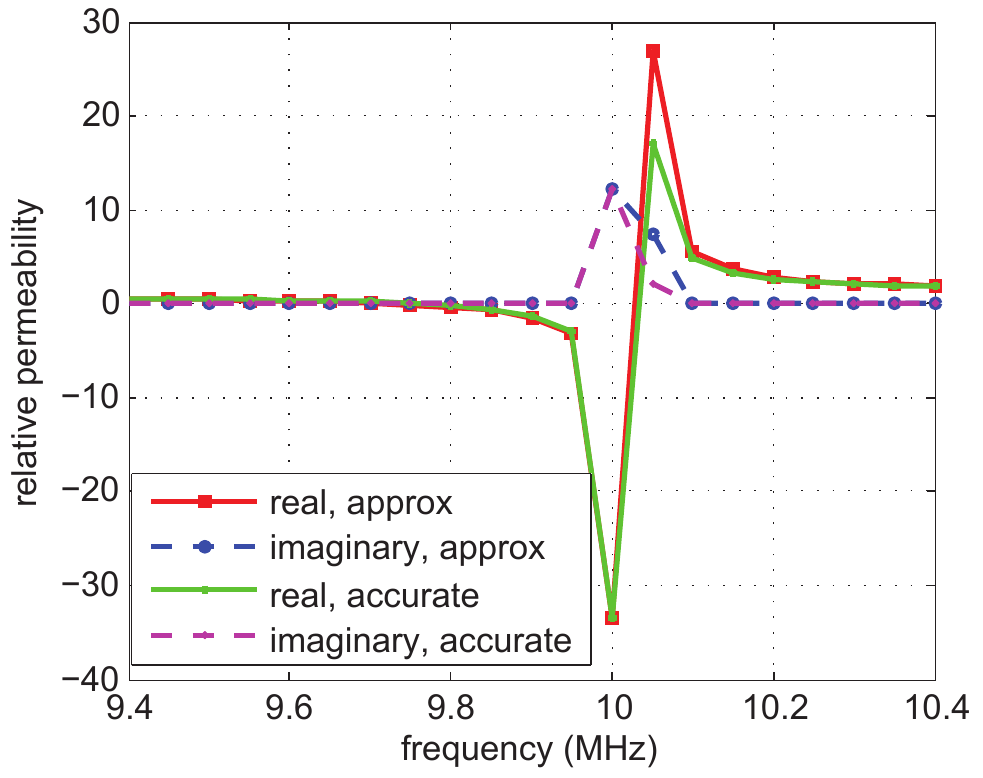}\quad
    \label{fig:mueff_f}}
      \vspace{-5pt}
  \caption{Effective permeability. }
    \vspace{-10pt}
  \label{fig:cf}
\end{figure}

The effective permeability is shown in Fig. \ref{fig:cf}. Due to the resistance of coils, the effective permeability is a complex number where the imaginary part stands for the loss in coils. First, we set the frequency as 10~MHz to investigate the effect of capacitance. As depicted in Fig. \ref{fig:mueff_cap}, the approximated and accurate self-inductance has the same trend but there is a little shift. Moreover, when the capacitance is small, the impedance of the capacitor is large since $Z_c=\frac{1}{j\omega C}$. Therefore, the relative permeability is negative as discussed in Section \ref{sec:review}. When the capacitance is large, the inductance is dominant. Then, the relative permeability becomes positive. Also, there is a dramatic change from negative to positive. The reason is that around the resonant point, the capacitor and inductor are almost perfectly matched. Hence, the reactance is very small and the current in each coil is significant. A slightly change of the capacitance or inductance can greatly change the current direction in the coil since an inductor adds $\pi/2$ to the current's phase while a capacitor reduces $\pi/2$ of the phase. As a result, the relative permeability changes its sign.

Then, we set the capacitance as 17.6~nF for the accurate self-inductance and 19.5~nF for the approximated self-inductance to investigate the effect of frequency. As shown in Fig. \ref{fig:mueff_f}, by using the approximated and accurate self-inductance, they have almost the same performance since both of them are tuned at 10~MHz. Furthermore, a large negative effective permeability is obtained at 10~MHz. However, the imaginary part of the permeability is also very large which reduces the merit of this point. Also, we notice that changing capacitance is similar as changing frequency since in \eqref{equ:effmu}, $\omega$ and $C$ have the same effect on the impedance of the capacitor.

\subsubsection{Variation of Induced Current Direction}
\label{sec:reradiated}

\begin{figure}[t]
  \centering
    \includegraphics[width=0.18\textwidth]{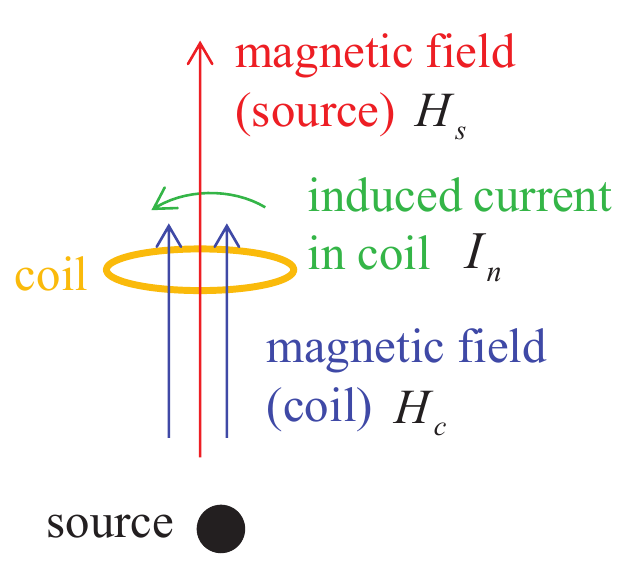}
    \vspace{-5pt}
  \caption{Induced current in a coil and reradiated magnetic field. }
  \vspace{-10pt}
  \label{fig:ideal3}
\end{figure}

%\begin{figure}[t]
%  \centering
%  \subfigure[Scenario 1 ]{
%    \label{fig:ideal2}
%    \includegraphics[width=0.225\textwidth]{fig/ideal2}}\quad
%  \subfigure[Scenario 2]{%
%    \includegraphics[width=0.225\textwidth]{fig/ideal3}
%    \label{fig:ideal3}}
%      \vspace{-5pt}
%  \caption{Induced current in a coil and re-radiated magnetic field. }
%    \vspace{-10pt}
%  \label{fig:ideal}
%\end{figure}
Based on our observations, we provide more insightful discussions to understand the negative permeability intuitively. As we have discussed in Section \ref{sec:negativemu}, by manipulating the reradiated magnetic field with coils we can change the effective permeability. Next, we show how the direction of the reradiated magnetic field is changed by adjusting the capacitor.

Consider that the equivalent circuit of a coil consists of coil resistance ${\hat R_c}$, coil inductance ${\hat L_c}$, and a tunable capacitor ${\hat C}$. The normal-direction incoming magnetic field is $H_s$ and the induced voltage on the coil is $V=-j\omega \pi a^2 \mu_0 H_s=jV_0$. Now we consider three scenarios to distinguish our approach from existing works and they are summarized as follows:
\begin{align}
\label{equ:largeL}I_c&=\frac{V}{{\hat R_c}+j\omega {\hat L_c}}\approx \frac{V_0}{\omega {\hat L_c}},when~\omega {\hat L_c}>\frac{1}{\omega {\hat C}};\\
\label{equ:equLC}I_c&=\frac{V}{{\hat R_c}}=j\frac{V_0}{{\hat R_c}},~when~\omega {\hat L_c}=\frac{1}{\omega {\hat C}};\\
\label{equ:largeC}I_c&=\frac{V}{{\hat R_c}+\frac{1}{j \omega {\hat C}}}\approx -V_0\omega {\hat C},~when~\omega {\hat L_c}<\frac{1}{\omega {\hat C}}.
\end{align}
As shown in \eqref{equ:equLC}, the current is shifted 90 degree from \eqref{equ:largeL} and in \eqref{equ:largeC} the current is further shifted 90 degree which is the opposite direction of the first scenario. Since the magnetic field reradiated by the first scenario is counter direction as $H_s$, the third scenario would reradiate the same direction magnetic field as $H_s$, as shown in Fig. \ref{fig:ideal3}. Therefore, by changing the capacitor, we can change the induced magnetic field direction to obtain the desired effective permeability. We consider the coil has low resistance, i.e., $R_c$ is much smaller than $\omega L$ and $1/\omega C$. It should be noted that this assumption is necessary and practical since the power dissipated in metamaterial is due to this resistance. Referring to our analysis in \cite{guo2015m2i}, the high loss metamaterial is hard to enhance the radiated magnetic field. In the full-wave simulation, the coil is made of low resistance copper which has low cost.

\section{Characterizing M$^2$I Communication under Practical Design}
In this section, the communication performance of spherical coil array-based M$^2$I is evaluated and compared with the original MI. The key factor of MI communication, i.e., magnetic field intensity, is discussed first. Then, the path loss, bandwidth and channel capacity of M$^2$I channel are presented.

Note that, the radius of the loop antenna inside a spherical coil array is set as 0.025~m, while the radius of MI loop antenna is set as 0.05~m, i.e., the outer radius of the coil array shell. In other words, they occupy the same space. Also, the ideal M$^2$I's performance is not comparable with coil array-based M$^2$I even they have the same negative permeability. Because the inner radius of ideal M$^2$I in \cite{guo2015m2i} is 0.025~m and the loop antenna is 0.015~m in radius. However, the coil array-based M$^2$I has almost no physical thickness. As a result, we can set the loop antenna much larger than ideal M$^2$I to fully use the space. Even the physical thickness of practical M$^2$I is small, in this section, we would show that the effective thickness and permeability are equivalent to the lumped capacitor, which can be adjusted to find the optimal configuration.
\subsection{Magnetic Field Analysis}
\subsubsection{Radiated Magnetic Field}
In this part, we put a loop antenna into the shell to evaluate its performance. The geometric model of both theory and simulation is demonstrated in Fig. \ref{fig:comgeo}. Only the large loop antenna is actively excited and all other small coils on the shell are passive.
\begin{figure}[t]
  \centering
  \subfigure[Simulation space]{
    \label{fig:comgeo1}
    \includegraphics[width=0.215\textwidth]{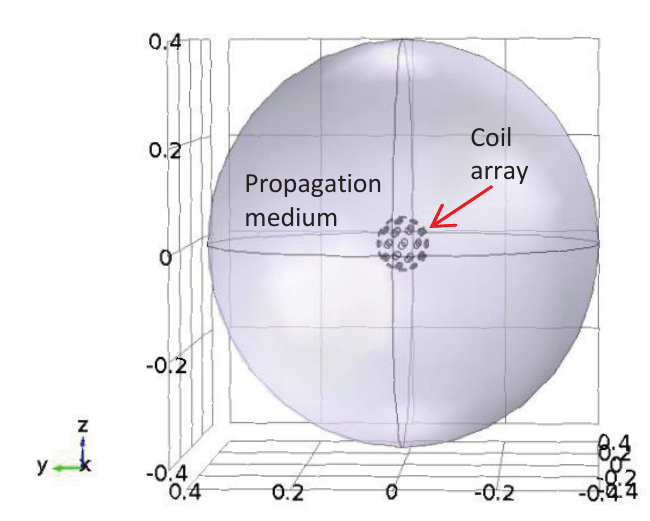}}\quad
  \subfigure[M$^2$I antenna]{%
    \includegraphics[width=0.215\textwidth]{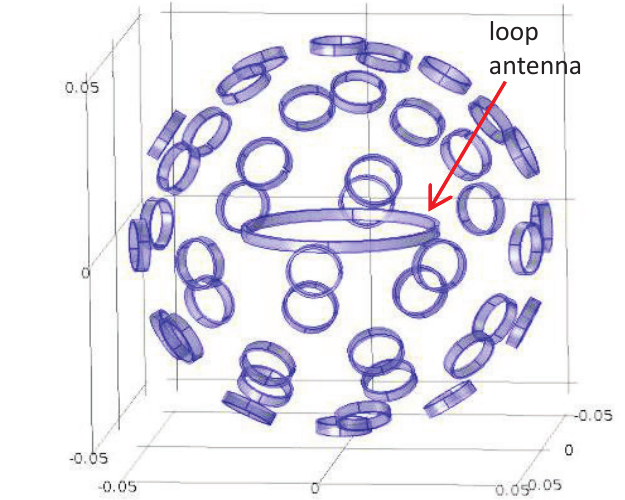}\quad
    \label{fig:comgeo2}}
      \vspace{-5pt}
  \caption{Geometry of full-wave simulation in COMSOL Multiphysics. }
    \vspace{-10pt}
  \label{fig:comgeo}
\end{figure}

\begin{figure}[t]
  \centering
    \includegraphics[width=0.4\textwidth]{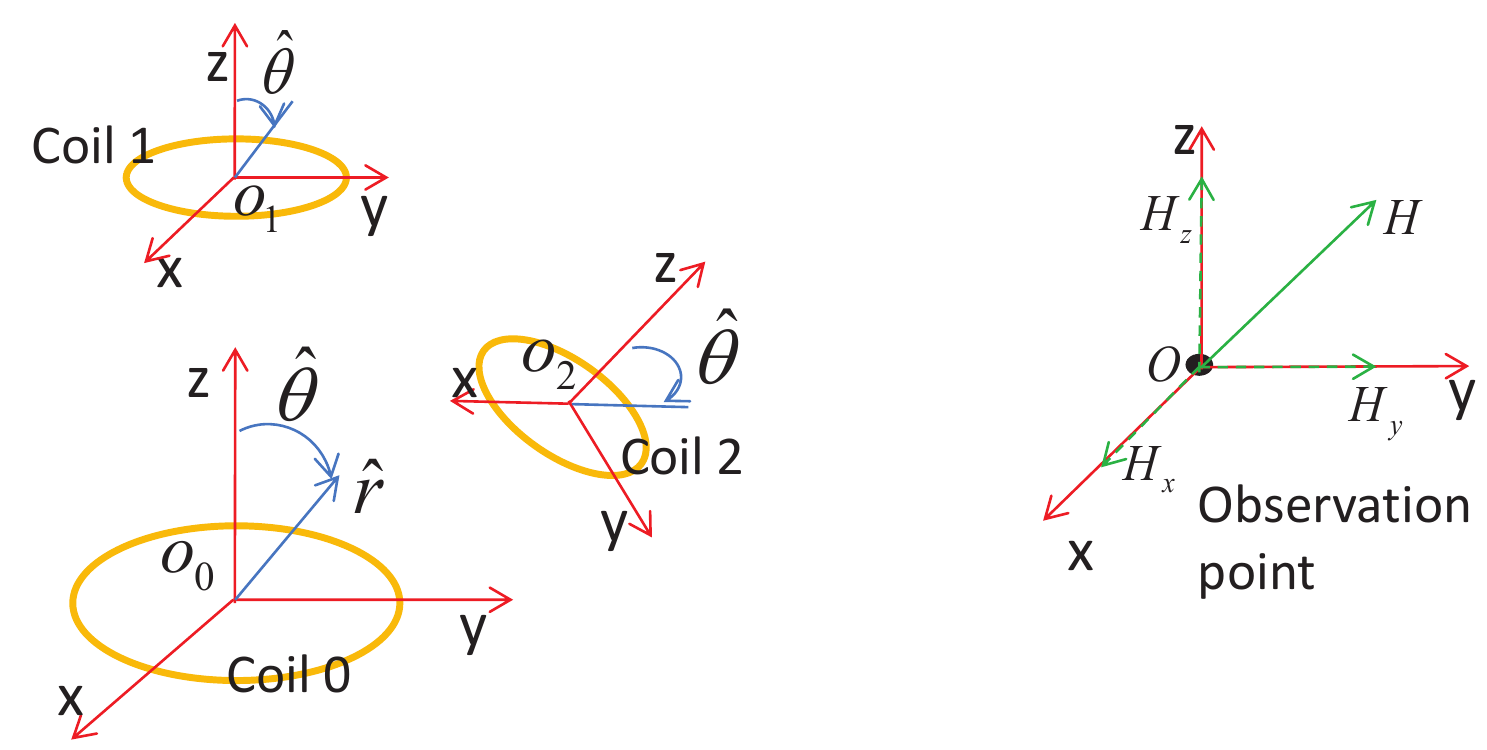}
    \vspace{-5pt}
  \caption{Coils' coordinates and field direction at an observation point. }
  \vspace{-10pt}
  \label{fig:direction}
\end{figure}

Since the induced power in the receiving coil is proportional to magnetic field intensity, we use this intensity as a metric to show the enhancement. Note that \eqref{equ:antenna_field} can be applied when the coordinates origin is the center of the coil and z-axis overlaps the coil's orientation vector. For instance, in Fig. \ref{fig:direction}, the three coils have their own coordinates with different orientations. At the observation point $O$, we first consider all the fields from each antenna individually. Then we decompose the fields in a Cartesian coordinates. Finally, we add all the fields together. As a result, the magnetic field at point $O$ can be found.

The radiated magnetic field is the summation of the radiated magnetic field by the loop antenna and reradiated magnetic fields by the passive coils on the shell, which can be expressed as
\begin{gather}
\label{equ:field}
\textstyle
H=\sum_{i=0}^{N}\left[H_r(I_i){\hat r_i}+H_t(I_i){\hat \theta_i}\right],
\end{gather}
where the large loop antenna is denoted by $i=0$. Then, the induced current can be determined by using \eqref{equ:app1}.

\subsubsection{Magnetic Field Enhancement}
\begin{figure}[t]
  \centering
  \subfigure[0.2~m]{
    \label{fig:fc02}
    \includegraphics[width=0.215\textwidth]{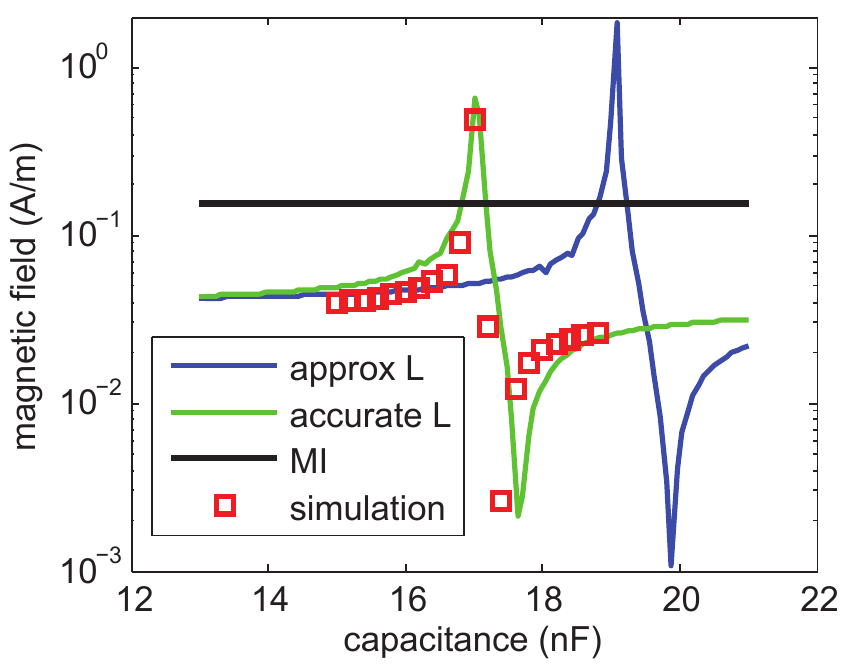}}\quad
  \subfigure[0.4~m]{%
    \includegraphics[width=0.215\textwidth]{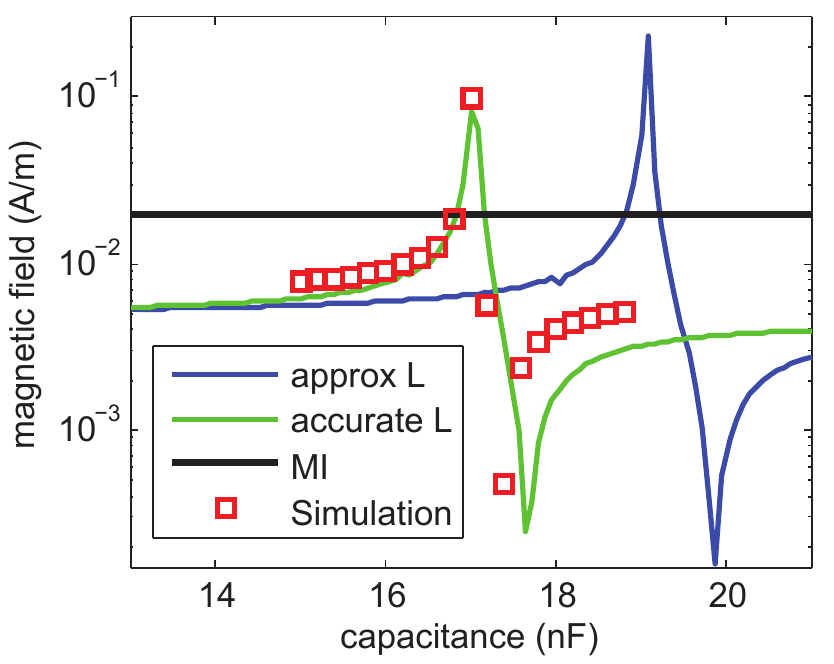}\quad
    \label{fig:fc04}}
      \vspace{-5pt}
  \caption{Magnetic field (A/m) at a distance 0.2~m and 0.4~m from the center of the shell. }
    \vspace{-10pt}
  \label{fig:fc0204}
\end{figure}

The radiated field is evaluated in this part. The radius of the loop antenna is set as 0.025~m and its current is set as 1~A. The MI antenna with a larger radius 0.05~m, i.e., the outer radius of the shell, is excited with the same current. Then, we compare the radiated magnetic field intensity at 0.2~m and 0.4~m from the center of M$^2$I and MI loop antennas. Since the dipole moment is $\pi a^2 I_0$, the MI loop antenna's dipole moment is 4 times larger than M$^2$I loop antenna. The effect of $C$ on the field intensity is investigated. Also, the magnetic field is measured along z-axis since $H_r$ is dominant.

As shown in Fig. \ref{fig:fc0204}, by using the accurate self-inductance, the theoretical result agrees well with the full-wave simulation. Also, by using the approximated self-inductance, it does not affect the trend of the resonance. The coil array-based M$^2$I can provide almost one order gain. Referring to Fig. \ref{fig:mueff_cap}, it is interesting to find that the highest magnetic field intensity does not appear at the most negative or positive effective permeability. The strongest field intensity is obtained when the effective permeability is a little smaller than 0. The reason can be interpreted from two aspects. On the one hand, we need a negative permeability to match with the inductive source. According to \eqref{equ:resonance}, by adjusting $C$, we can always find the resonance. On the other hand, when the effective permeability becomes more negative, the imaginary part of the permeability increases dramatically which deteriorates the performance. Therefore, we have to strike a balance between the negative permeability and low loss, i.e., small imaginary part of the effective permeability. Additionally, by comparing Fig. \ref{fig:fc02} and Fig. \ref{fig:fc04}, we find that the gain from this coil array-based M$^2$I does not change with distance which can validate that this gain is from radiation, but not near field induction. In Fig. \ref{fig:dm_distance}, we compare the radiated field by M$^2$I coil array (outer radius is 0.05~m) and original large MI antenna (radius is 0.05~m). Due to the high computation burden in COMSOL, we show the simulated magnetic field intensity within 1~m. The results further proofs that the gain is a constant with distance change.

 %In the deeply near region, the power generated by a loop antenna is almost reactive which cannot propagate. The existing works on magnetic induction-based communication or wireless power transfer use resonant coils since by removing the reactive impedance in coils, most of the input power can be dissipated and radiated by the coil instead of being stored in the lumped elements. However, according to Lenz's Law \cite{Frankl}, the reradiated magnetic field's ($H_c$) direction apposes the magnetic field ($H_s$) radiated by the source, as shown in Fig. \ref{fig:ideal2}. Therefore, as the two magnetic fields have opposite directions, the signal or power transmission can only be conducted by using induction in the near field which limits the range.
%From the perspective of power, this gain indeed comes from metamaterial matching. In order to enlarge the communication range, we need to let the power radiated out rather than constrained in the near field. By adding this coil array, the reactive power is transmitted to the coils by means of magnetic induction. Then, the coils on the shell converts some of the reactive power into real power. Therefore, with the help of the coil array, the reactive power generated by the loop antenna is converted to real power which attenuate much slower than the evanescent wave in the near field.

\subsubsection{Resonance Condition}
Moreover, the resonant capacitor value can be estimated. In order to enhance the magnetic field, we need to maximize $I_n$ to enlarge the reradiated field by the coils. In \eqref{equ:effmu}, the second item in the bracket can be maximized when the reactive terms on the denominator are eliminated. Therefore, by letting $j\omega L_c+1/j\omega C+\sum_{i=1,i\neq n}^{N}j\omega M_{in}=0$ and $L_c\approx L_{app}=\frac{\mu \pi a}{2}$, we can obtain
\begin{gather}
\label{equ:resonance}
\frac{a}{r_2}\approx\sqrt[3]{\frac{1}{15.3}\left(\frac{1}{2}-\frac{1}{\omega^2 \mu_0 \pi a C}\right)}.
\end{gather}

Referring to equation (13) in \cite{guo2015m2i}, by adjusting the thickness and negative permeability, we can achieve a resonance to amplify the magnetic field intensity. In this paper, when all other configurations are fixed, by adjusting the capacitor $C$ we can also achieve this resonance, as suggested by \eqref{equ:resonance}. Since the negative permeability metamaterial is equivalent to capacitance as discussed in Section \ref{sec:review}, they have the same physical principles. Therefore, changing $C$ is equivalent to changing both the thickness and the negative permeability.

\subsubsection{Negative Self-Inductance}

\begin{figure*}[t]
\begin{minipage}{0.23\textwidth}
\centering
  \includegraphics[width=0.9\textwidth]{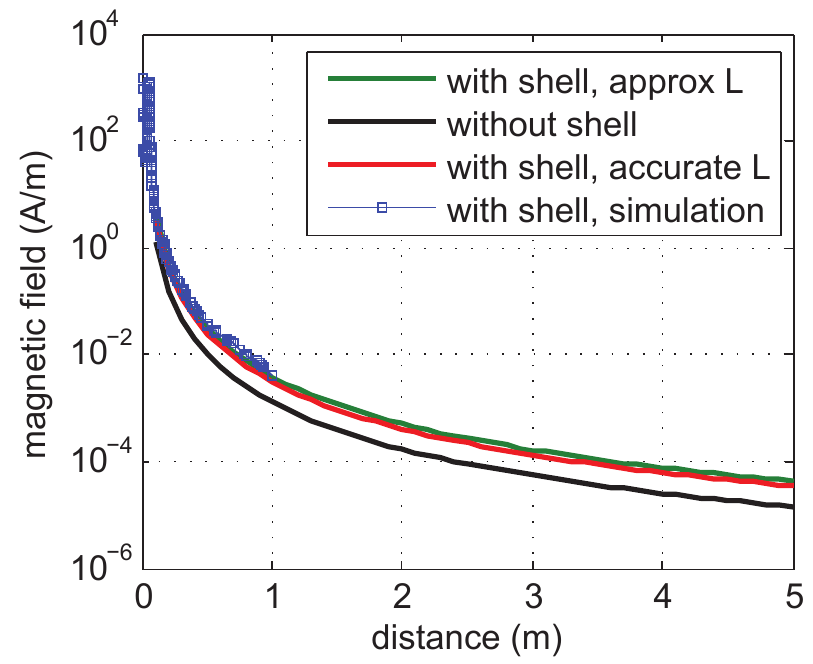}
  \caption{Distance effect on Magnetic field intensity.}
    \label{fig:dm_distance}
\end{minipage}\quad
\begin{minipage}{0.7\textwidth}
\centering
  \subfigure[17~nF]{
    \label{fig:fd0}
    \includegraphics[width=0.3\textwidth]{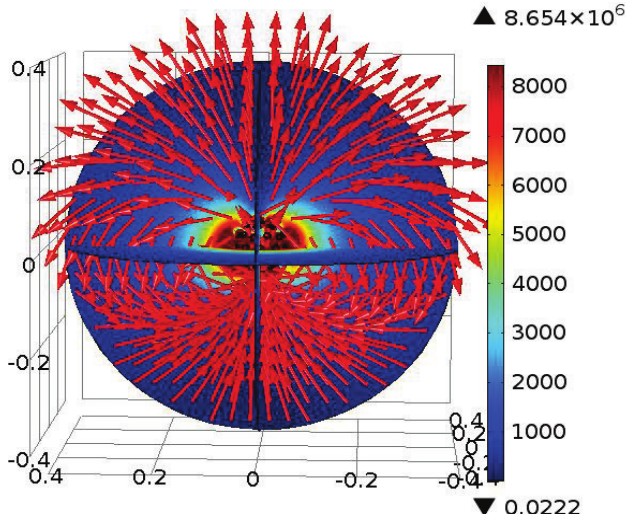}}\quad
  \subfigure[17.2~nF]{%
    \includegraphics[width=0.3\textwidth]{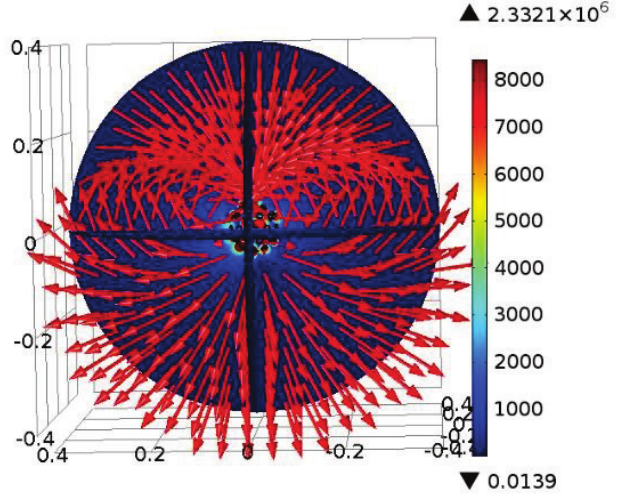}\quad
    \label{fig:fd1}}
    \subfigure[17.5~nF]{%
    \includegraphics[width=0.3\textwidth]{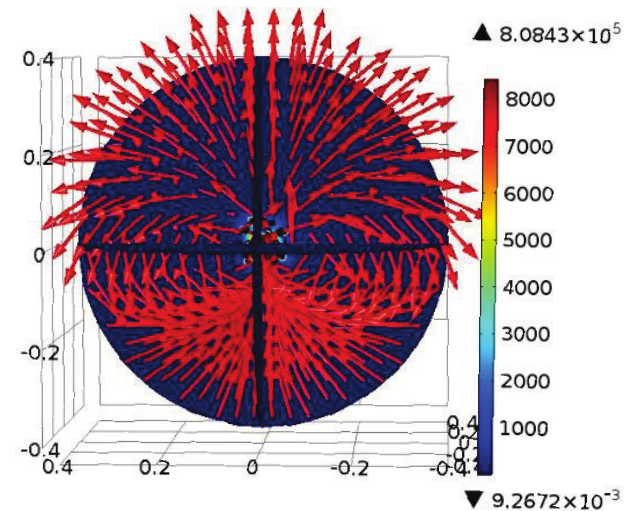}\quad
    \label{fig:fd2}}
  \caption{Simulation results of magnetic field direction close to resonance condition.}
    \vspace{-10pt}
  \label{fig:fd}
\end{minipage}\vspace{-10pt}
\end{figure*}
It is also interesting to notice that there is a dramatic decrease of field intensity after the resonance point in Fig. \ref{fig:fc0204}, i.e., around 17.5~nF for the accurate $L$ and 20~nF for the approximated $L$. In \cite{guo2015m2i}, we found a negative self-inductance and the reason is explained theoretically. The results are also validated in Fig. 14 and Fig. 16 in \cite{guo2015m2i}. Since the self-inductance changes from positive to negative then it changes back from negative to positive,
there exists two points where the self-inductance is almost 0. Here, by using a practical M$^2$I, we find the same negative self-inductance and the dramatic decrease of magnetic field intensity is because of it. In the following, we explain the reason of this negative self-inductance from the perspective of practical design instead of ideal metamaterial.

By using the full-wave simulation results in COMSOL, which are shown in Fig. \ref{fig:fd}, we explain how the negative self-inductance is generated. When $C$=17~nF, the magnetic field intensity is large and the self-inductance is positive since the reradiated magnetic field and the magnetic field radiated by the loop antenna have the same direction.  However, if we increase $C$ to 17.2~nF, the magnetic field changes its direction and the self-inductance becomes negative. Referring to Fig. \ref{fig:ideal2}, when $H_c$ is larger than $H_s$, the overall magnetic field changes its direction from $H_s$ to $H_c$. In other words, when $1/j\omega C$ is smaller than $j\omega L$ and the the reradiated magnetic field from the shell is larger than the original magnetic field radiated by the loop antenna, the total field radiated from the coil array will change its direction. Note that, here Right-hand rule no longer validates, since a positive current in the loop antenna generates a magnetic field obeys Left-hand rule. Therefore, the self-inductance is negative. When we further increase $C$ to 17.5~nF, the self-inductance changes to positive again since the resonance does not exist and the currents in the coils are much smaller than the current in the loop antenna. Thus, when the self-inductance gradually changes from negative to positive, there is a region where its value is around zero which makes the radiated field extremely small. However, the change from positive to negative is dramatic and this effect is not obvious. The effective self-inductance $L_{eff}=L_{app}+\Im({Z_{ref}})/j\omega$, where $\Im$ denotes the imaginary part of a complex number, is also plotted in Fig. \ref{fig:inductance}. It shows the negative self-inductance is at around 19~nF. Due to the coupling among the loop antenna in the shell and coils on the shell, the reflected impedance in the loop antenna need to be considered which can be expressed as~\cite{Sun_MI_TAP_2010} $Z_{ref}=\sum_{i=1}^{N} \frac{\omega^2 M_{0i}^2}{Z_i}$, where the subscript 0 denotes the loop antenna. The negative self-inductance generated by the spherical coil array is consistent with the discussion in \cite{guo2015m2i}, i.e., Fig. 14 and Fig. 16, by using the ideal metamaterial.

Moreover, since we measure the magnetic field along z-axis, readers may wonder the two coils on z-axis may take dominant effect and other coils can be neglected. In Fig. \ref{fig:two_coil}, we only keep the two coils on z-axis and remove other coils. The capacitance $C$ is set as 17~nF. From the figure we can see the magnetic field intensity generated by the loop antenna is much smaller than that in Fig. \ref{fig:fd0} which can prove the contribution from the spherical coil array.
\begin{figure}[t]
\begin{minipage}{0.23\textwidth}
\centering
  \includegraphics[width=0.9\textwidth]{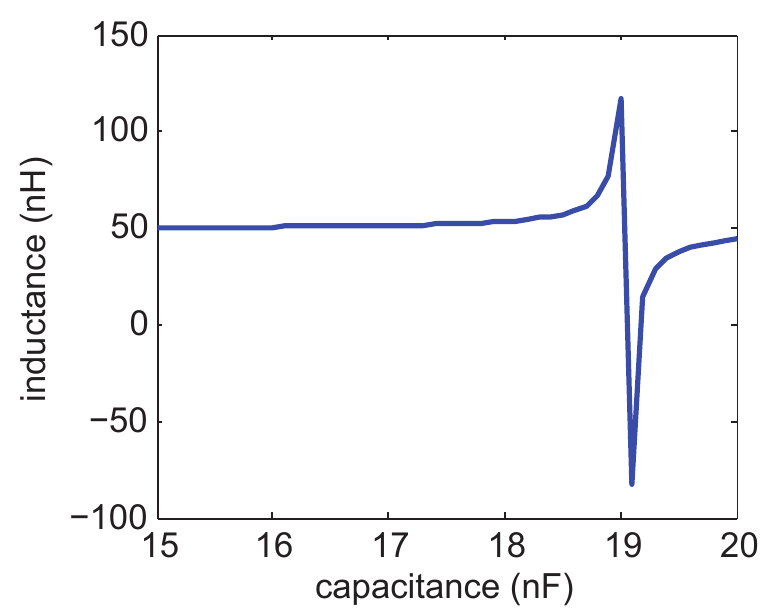}
  \caption{Self-inductance}
    \label{fig:inductance}
\end{minipage}\quad
\begin{minipage}{0.23\textwidth}
\centering
    \includegraphics[width=0.9\textwidth]{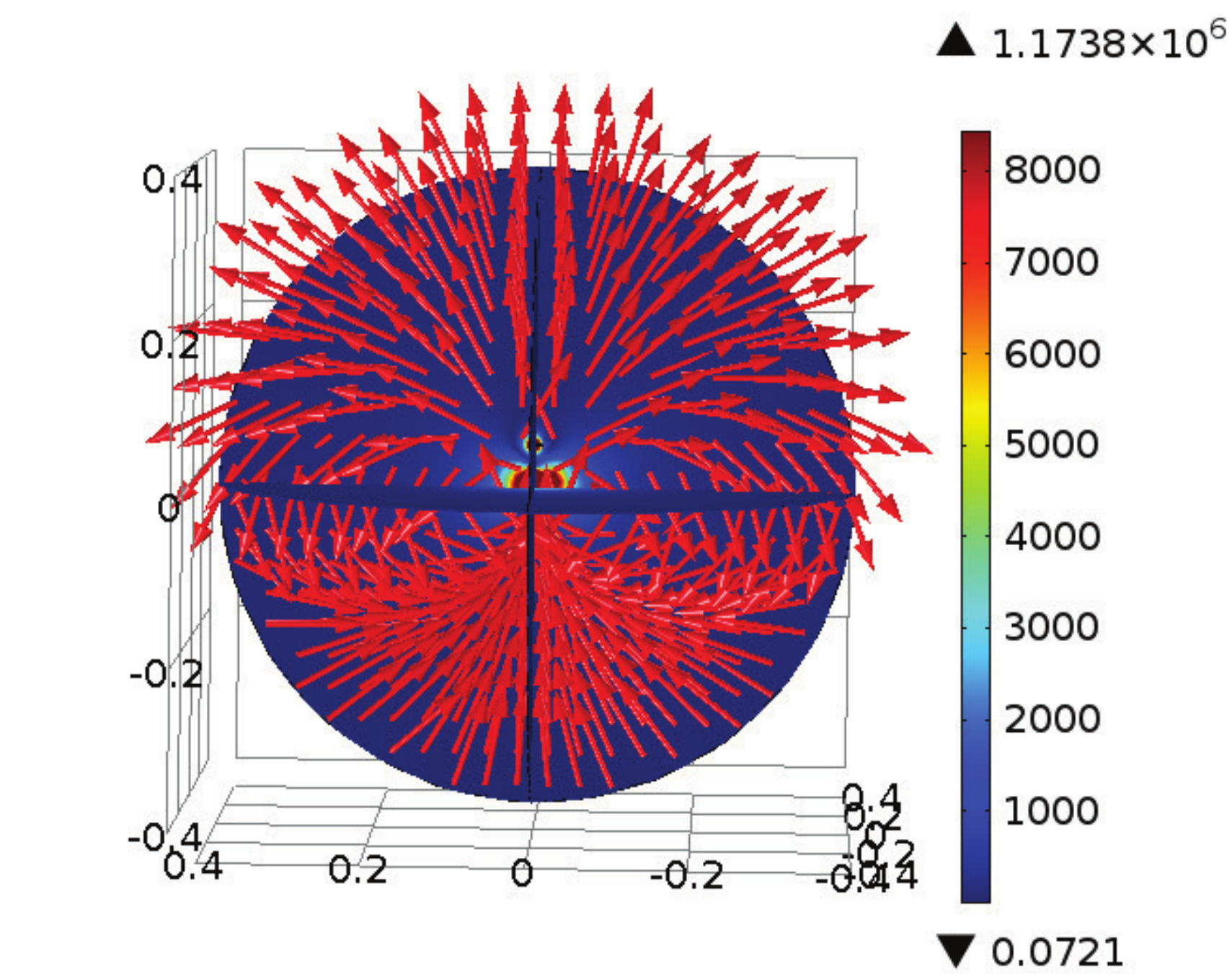}\quad
    \caption{Two coil on the shell}
    \label{fig:two_coil}
\end{minipage}
    \vspace{-10pt}
\end{figure}

\subsection{Wireless Channel Analysis}
Up to this point, we have demonstrated that the coil array-based M$^2$I can significantly enhance the radiated magnetic field intensity. In this section, we present the performance of the wireless channel between two M$^2$I transceivers, i.e., both the transmitter and receiver are equipped with coil array-based M$^2$I antenna.. In particular, the path loss, bandwidth and channel capacity are discussed.
\begin{figure*}[t]
  \centering
  \subfigure[Equivalent circuit]{
    \label{fig:circuit}
    \includegraphics[width=0.26\textwidth]{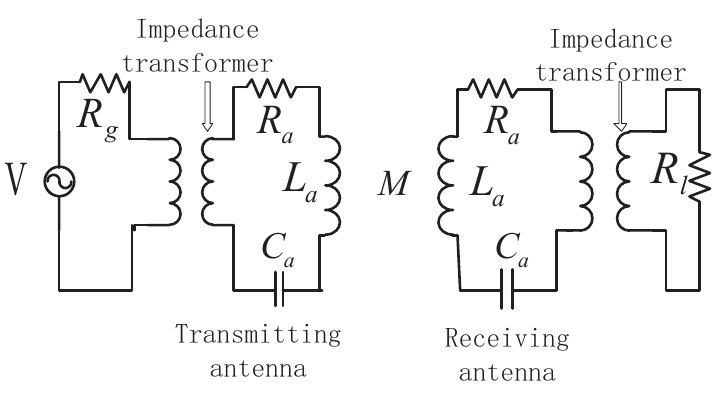}}\quad
  \subfigure[Path loss]{
    \label{fig:pathloss}
    \includegraphics[width=0.215\textwidth]{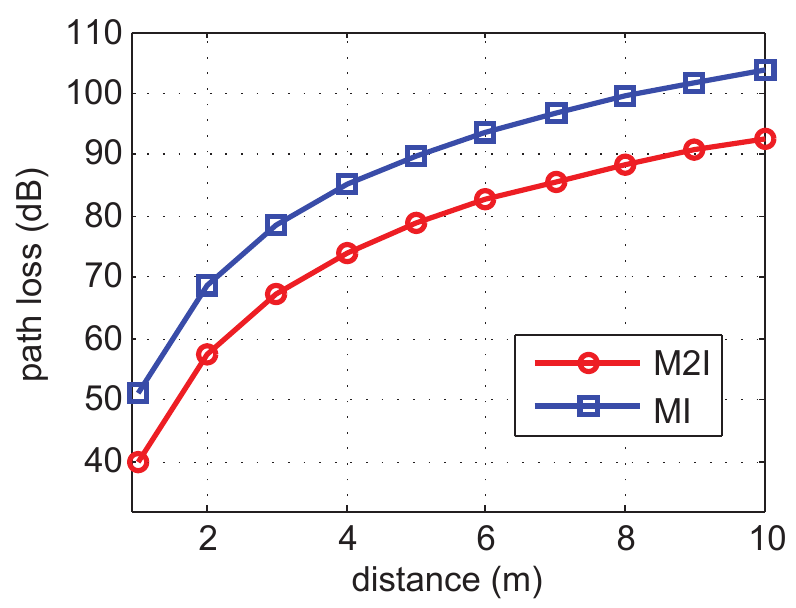}}\quad
  \subfigure[Frequency response]{%
    \includegraphics[width=0.2\textwidth]{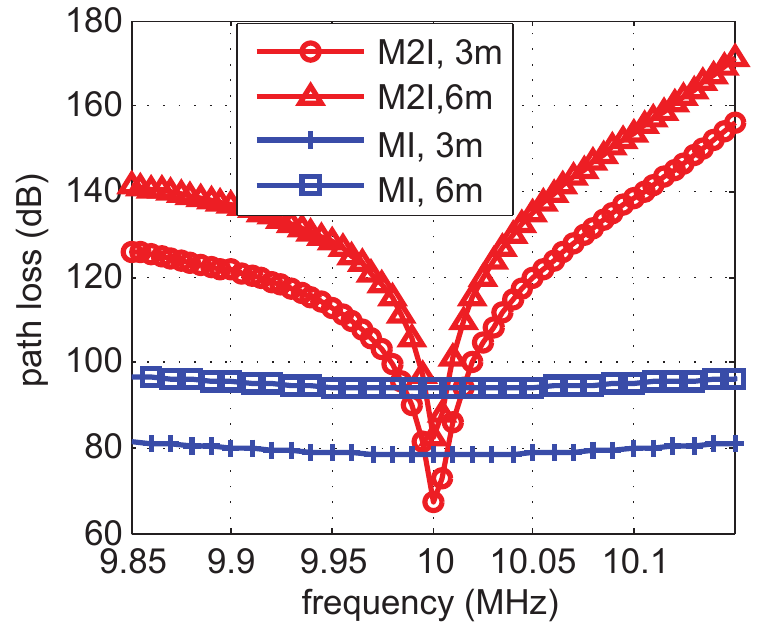}\quad
    \label{fig:bw}}
    \subfigure[Channel capacity]{%
    \includegraphics[width=0.2\textwidth]{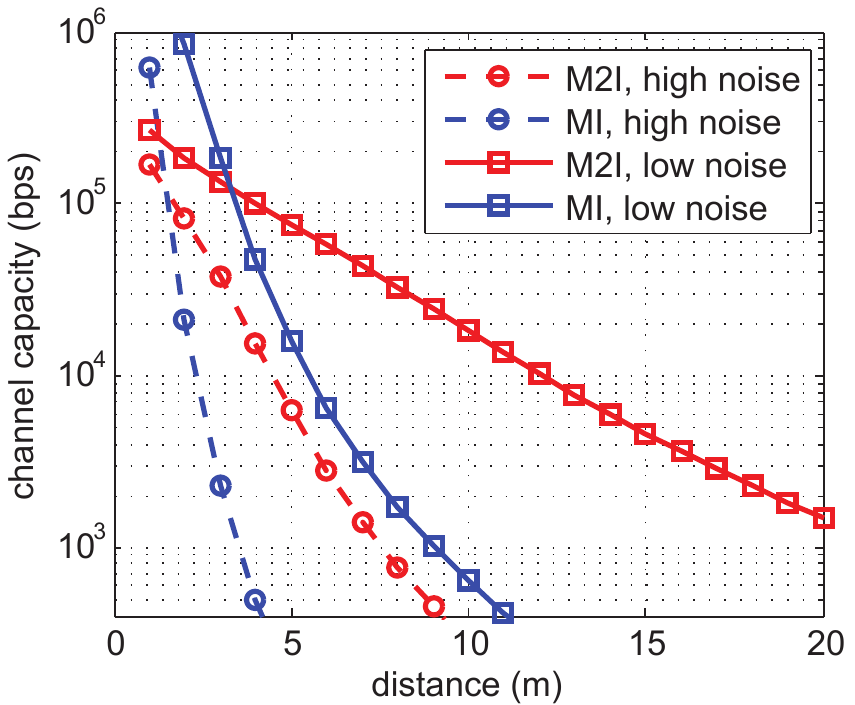}\quad
    \label{fig:capacity}}
      \vspace{-5pt}
  \caption{Communication performance. }
    \vspace{-10pt}
  \label{fig:communication}
\end{figure*}
\subsubsection{Path Loss}
The equivalent circuit is shown in Fig. \ref{fig:circuit}, where $R_g$ and $R_l$ are the source's output resistance and load resistance, respectively. Both of the $R_g$ and $R_l$ are set as a typical value 50~$\Omega$. The impedance transformer is utilized for impedance matching between the source/load and antenna. The resistance, self-inductance and tunable capacitance in the antenna circuit are denoted by $R_a$, $L_a$, and $C_a$, respectively.  When we compare the M$^2$I transceivers and original MI transceivers, the same as previous discussions, the radius of the original MI antenna is the outer radius of the shell, i.e., 0.05~m. In order to calculate the magnetic field at the receiving side, we need to add the passive coils in the receiving M$^2$I antenna into \eqref{equ:app1} and \eqref{equ:field}. Now, we have a transmitting antenna, 84 passive coils, and a receiving antenna. Therefore, the updated dimensions of ${\bf Z}$, ${\bf I}$ and ${\bf V}$ in \eqref{equ:app1} are 86$\times$86, 86$\times$1, and 86$\times$1, respectively.

Then the inductance in the loop antenna is matched by a tunable capacitor $C_a$. The resistance is matched by the impedance transformer:
\begin{gather}
{N_1^2}\cdot{[R_a+\Re(Z_{ref})]}={R_g}\cdot{N_2^2},
\end{gather}
where $\Re$ stands for the real part of a complex number, and $N_1$ and $N_2$ are the primary and secondary turns, respectively. Therefore, the voltage for the transmitting loop antenna is $V_0=V N_2/2N_1$. All other voltages in the updated \eqref{equ:app1} are 0. By solving the updated \eqref{equ:app1}, the induced current in antennas can be found. The received power can be expressed as $P_r=|I_l|^2R_l$, where $I_l$ is the load current. Similarly, the dissipated power in the source can be expressed as $P_t=V^2/2R_g$. Then, we can obtain the path loss ${\mathcal L}(d)=-10lg(P_r/ P_t)$.

Next, by using the approximated self-inductance $L_{app}$ and letting $C$=19.06~nF (optimal value find in Fig. \ref{fig:fc0204}), we compare the performance of coil array-based M$^2$I with that of the original MI. Also, the loop antenna resistance are set as 0.1~$\Omega$ and 0.2~$\Omega$ for coil array-based M$^2$I (antenna radius 0.025~m) and the original MI (antenna radius 0.05~m), respectively. As shown in Fig. \ref{fig:pathloss}, the path loss of M$^2$I is much lower than that of original MI which is consistent with~\cite{guo2015m2i}.
\subsubsection{Bandwidth}
In order to investigate the frequency response of the system, we keep the distance between a transmitter and a receiver as 3~m and 6~m and vary the frequency. As shown in Fig. \ref{fig:bw}, the enhancement of M$^2$I greatly reduces the bandwidth since magnetic field intensity is strongly based on resonance. The 3 dB bandwidth for M$^2$I is 4~kHz, while it is 320~kHz for original MI. Also, we notice that changing frequency is consistent with changing capacitance. When frequency is higher than the designed frequency, the signal strength drops fast. Moreover, at 3~m and 6~m, the bandwidth are almost the same, which implies that it is not affected by distance.
\subsubsection{Channel Capacity}
The Shannon channel capacity is also evaluated. According to \cite{rappaport1996wireless}, ${\mathcal C}=Bw \cdot log_2[1+P_r/(Bw\cdot N_{noise})]$, where $Bw$ is the bandwidth and $N_{noise}$ is the noise density. We set the transmission power as 10~dBm and consider two noise levels, namely, high noise -110~dBm/Hz and low noise -130~dBm/Hz. As it shown in Fig. \ref{fig:capacity}, M$^2$I can increase the communication range significantly. Besides the low path loss of M$^2$I, the narrower bandwidth also brings lower noise power.
\section{Conclusion}
Metamaterial has been introduced to enlarge magnetic induction's communication range and data rate. However, the existing works are based on ideal metamaterial model. In this paper, we propose a practical design of Metamaterial-enhanced Magnetic Induction (M$^2$I) communication by using a spherical coil array. The physical principles and geometric structure of this design are introduced. The relation between this practical M$^2$I and the ideal metamaterial based M$^2$I are discussed. Through the communication performance evaluation, we find it can significantly increase the channel capacity, which in turn extends the communication range greatly.

%This design works for radial direction since the orientation of the coils on the shell is only radial direction. In order to make this enhancement isotropic, in our future work, a tri-directional coil will be added and experiments will be conducted to evaluate its performance. Then, the gap between ideal metamaterial and this spherical coil array can be closed.

\section*{Appendix}
\subsection{Mutual Inductance}
The mutual inductance utilized in \eqref{equ:kirchhoff,equ:rearrange1,equ:effmu,equ:app1} is derived here. Without loss generality, in Fig. \ref{fig:direction}, the mutual inductance between coil 0 and coil 1 can be denoted as $M_{01}$, where coil 0 is transmitting coil and coil 1 is receiving coil. We can find the magnetic field intensity radiated by coil 0 by using \eqref{equ:antenna_field}. Let the magnetic field intensity at coil 1 be $H_0 {\hat h}$, where ${\hat h}$ is the magnetic field direction. Then the mutual inductance can be expressed as $M_{01}={\mu_0 H_0 \pi a_1^2 {\hat h}\cdot {\hat o_1}}/{I_0},$
where $a_1$ is the radius of coil 1, $I_0$ is the current in coil 0, and ${\hat o_1}$ is the orientation unit vector of coil 1.
\subsection{Proof of Proposition \ref{pro:IH}}
First, in \eqref{equ:kirchhoff} ${\bf Z}{\bf I}_0={\bf V}_0$. Assume that all the elements in ${\bf V}_0$ are the same, i.e., $V_1=V_2=\cdots=V_N=-j\omega\pi a^2 \mu_0 {\hat H_r}$. Since all the coils are uniformly distributed on the spherical shell, the structure is symmetrical. Moreover, as the coils have the same excitation voltages, they should have the same induced currents, i.e., $I_1=I_2=\cdots=I_N$. Equation \eqref{equ:kirchhoff} reduces to $I_1(Z_1+\sum_{i\neq 1}^{N}j\omega M_{1i})=V_1$. Thus, $I_1={\mathcal A} {\hat H_r}$, where ${\mathcal A}=\frac{-j\omega\pi a^2 \mu_0}{R_c+j\omega L_c+1/j\omega C+\sum_{i=1,i\neq n}^{N}j\omega M_{in}}$ is a constant. Since the coils are uniformly distributed, $\sum_{i=1,i\neq n}^{N}j\omega M_{in}$ are the same for all the coils no matter which $n$ we choose.

Then, we consider the source is a magnetic loop antenna and \eqref{equ:kirchhoff} is updated as ${{\bf Z}}{\hat {\bf I}}={\hat {\bf V}}$. According to \eqref{equ:voltage}, $V_n={\mathcal A} {\hat H_r}\cos\theta_n$. Hence ${\hat{\bf V}}={\bf K}{\bf V}_0$, where ${\hat {\bf K}}$ is a diagonal matrix and $diag({\hat K_n})=\cos\theta_n$. Thus, ${{\bf Z}}{\hat {\bf I}}={\bf K}{\bf V}_0$. By multiplying the inverse of ${\bf K}$, we can obtain ${\bf K}^{-1}{ {\bf Z}}{\hat {\bf I}}={\bf V}_0$. Also, ${{\bf Z}}$ can be regarded as diagonal matrix since mutual inductances are much smaller than coil impedances, i.e., $j\omega M<<(R_c+j\omega L_c-j/\omega C)$. Thus, ${{\bf Z}}{\bf K}^{-1}{\hat {\bf I}}={\bf V}_0$ and ${\bf I}_0={\bf K}^{-1}{\hat {\bf I}}$. Then, we can obtain ${\hat {\bf I}}={\bf K}{\bf I}_0$. As a result, ${\hat I}_n=\cos\theta_n I_n={\mathcal A} {\hat H_r}\cos\theta_n$.
%****************************************************************************
\bibliographystyle{IEEEtran}
\bibliography{meta_shell_2}

\end{document}